 \newcommand{\PROB}{$Dmax_{\ast}(S,m)$}
  \newcommand{\PROBS}{$Smax_{\ast}(S,m)$}
    \newcommand{\PROBSS}{$Smax_{\Box}(S,m)$}
\newcommand{\MWPIHP}{\textit{MWPIHP}}
  \newcommand{\NAMEH}{$Dmax_{\Box}(S,m)$}
  \newcommand{ \NAMED}{$Dmax_{\medcircle}(S,m)$}
    \newcommand{\NAMEHr}{$RSmax_{\Box}(S,m)$}
   \newcommand{\NAMEHS}{$HSmax_{\Box}(S,m)$}
   \newcommand{ \NAMEDS}{$Smax_{\medcircle}(S,m)$}
   \newcommand{ \NAMEDSH}{$HSmax_{\medcircle}(S,m)$}
 \newcommand{\minCell}{$minC$}
 \newcommand{\maxCell}{$maxC$}
  \newcommand{\cell}{\textit{CELL}}
\newtheorem{theorem}{Theorem}
\begin{document}

\title{Construction and Maintenance of Swarm Drones\thanks{This research has been supported by the grant from Pazy Foundation}}
\author{Kiril  Danilchenko \\\href{mailto:kirild@post.bgu.ac.il}{kirild@post.bgu.ac.il} \and
Michael Segal \\ \href{mailto:kirild@post.bgu.ac.il}{segal@bgu.ac.il} }
\date{%
    School of Electrical and Computer Engineering \\ 
Ben-Gurion University of the Negev, Beer-Sheva, Israel%
}

%
%
%
%

\maketitle              
\begin{abstract}
 In this paper we study the dynamic version of the covering problem motivated by the coverage of drones' swarm: Let $S$ be a set of $n$ non-negative weighted points in the plane representing users. Also consider  a set  $P$ of $m$ disks that correspond to the covering radius of each drone. We want to place (and maintain) set $P$ such that the sum of the weights of the points in $S$ covered by disks from $P$ is maximized. We present a data structure that maintains a small constant factor approximate solution efficiently, under insertions and deletions of points to/from $S$ where each update operation can be performed $O(\log n)$ time.

\end{abstract}

\section{Introduction}
Unmanned aerial vehicles (UAVs or drones) have been the subject of concerted research over the past few years. UAVs have many potential applications in wireless communication systems \cite{zeng2016wireless}. In particular, UAV-mounted mobile base stations (MBSs).
UAVs-mounted MBSs can be used to provide wireless coverage in several scenarios: emergency cases, battlefields, and more. The performance of the UAV-based network depends on the deployment of UAVs. Generally, the drone deployment problem has been studied to find out the optimal 3D positions for drones to serve ground users on a 2D plane.
The researchers have focused on deployment strategies based on minimizing the number of UAVs required to provide wireless coverage to all ground users.  

Opposite to the most existing work where the number of UAVs is assumed to be at least as a number of ground users \cite{lyu2016placement,mozaffari2016efficient,pugliese2016modelling}, here we consider a more realistic scenario. We focus on the situation where the number of UAVs is given, and this number is significantly less than the number of ground users. This assumption is reasonable in emergency cases or battlefields where the number of ground users (for example, soldiers or firefighters) is more than the number of UAVs. 
As a result, we face a problem that not all users will necessarily be covered. Additionally,  we assume that the ground users have a rank. For example, on the battlefield, it is more critical to provide wireless coverage to the commanding officer than to the regular soldier. The number, rank, and location of ground users may be changed at a time to time.
Thus, we are facing the most crucial question "who should be covered and who should not, at any point of time?" For a given particular snapshot, this problem is known to be NP-hard \cite{de2009covering}. In this paper, we assume that all UAVs fly at the same, fixed altitude. See Figure \ref{fig:fig}.
To deal with our problem, we presented a dedicated data structure and a constant factor approximation algorithm under addition, deletion, or rank change of ground user. 
\begin{figure}[H]
\centering
   \includegraphics[width=.7\linewidth]{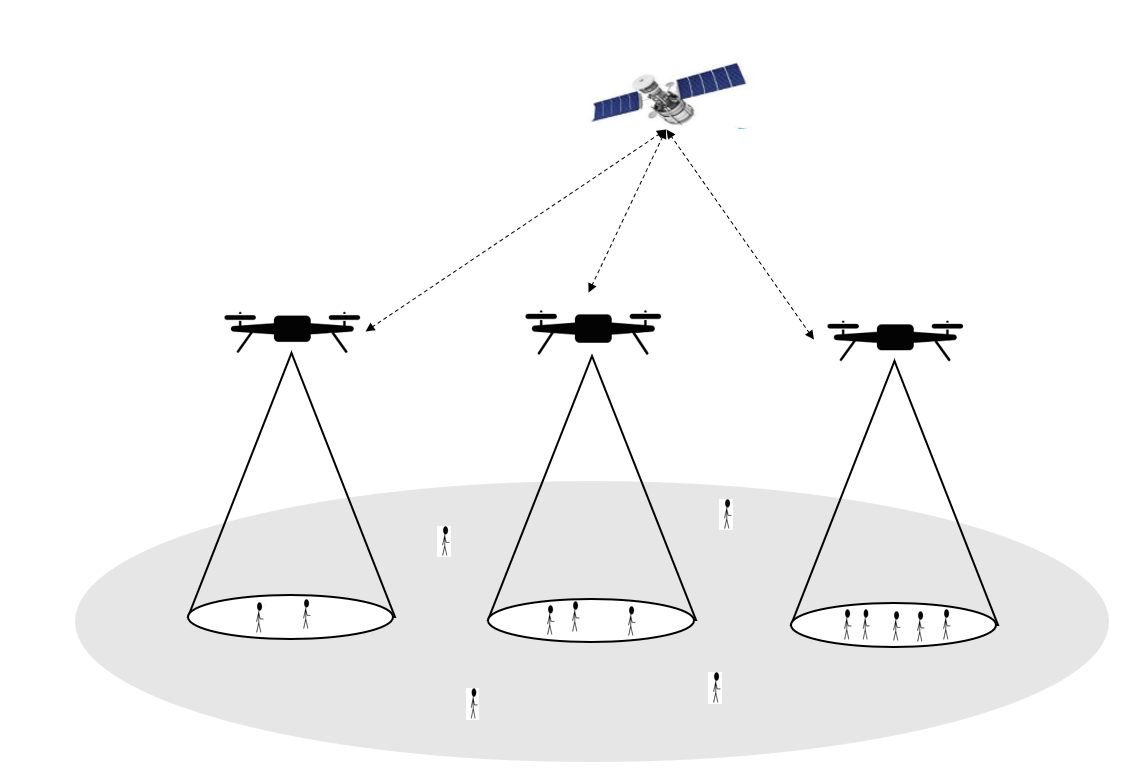}
  
\caption{Coverage of ground users by UAVs.}
\label{fig:fig}
\end{figure}
The rest of the paper is organized as follows. Section 2 
reviews the related work. In Section 3, we formally describe the problem. Section 4 presents the approach for calculating the upper bound value of our problem.
The proposed solutions for the scenario where the ground users are static are shown at Section 5. Section 6 presents solutions for dynamic version. Finally, Section 7 concludes the paper and discusses the future work.
\vspace{-0.4cm}
\section{Related Work}
\vspace{-0.2cm}
The authors of \cite{lyu2016placement} aimed to provide wireless coverage for a group of ground terminals, ensuring that each
ground terminal is within the communication range of at
least one UAV with minimal number of UAVs. They proposed a polynomial-time algorithm
with successive UAVs placement, where the UAVs are placed
sequentially starting from the area perimeter of the uncovered
ground terminals along a spiral path towards the center, until
all ground terminals are covered. The authors in \cite{mozaffari2016efficient} studied the optimal deployment of UAVs, such that  the total coverage area is
maximized. The UAVs were assume to be equipped with directional antennas.

The authors of \cite{pugliese2016modelling} and  \cite{zorbas2016optimal} required to provide drones deployments in which a given set of objects should be
covered by the sensing range of at least one drone. Each object should be monitored for a certain amount of time. The authors consider minimizing the number of drones to cover the target and the total energy consumption of the drones.

The paper \cite{zhao2018deployment} investigated a challenging problem of how to deploy multiple UAVs
for on-demand coverage while at the same time maintaining the
connectivity among UAVs. The paper \cite{srinivas2009construction} utilizes autonomous mobile base stations (MBSs) to maintain network connectivity while minimizing the number of MBNs that are deployed. They \cite{srinivas2009construction} formulate the problem of reducing the number of MBSs and refer to it as the Connected Disk Cover (CDC) problem. They show that it can be decomposed into the Geometric Disk Cover (GDC) problem and the Steiner Tree Problem with Minimum Number of Steiner Points (STP-MSP). 
The authors proved that if these subproblems can be solved separately by $\gamma$-approximation and $\delta$-approximation algorithms, respectively, the approximation ratio of the joint solution is $\gamma+\delta$.

The authors of \cite{de2009covering} were the first who presented the problem of covering the maximum number of points in the point set $S$ with $m$ unit disks. They call this problem $max(S,m)$. They gave the first $(1 -\varepsilon)$-approximation algorithm for the problem $max(S,m)$ with time complexity $O(n\varepsilon^{-4m+4})\log^{2m-1}(\frac{1}{\varepsilon })$. The problem to place $m$  rectangles such
that the sum of the weights of the points in $P$ covered by these rectangles is maximized is considered in \cite{li2015linear}. For any fixed $\varepsilon>0$, the authors present efficient approximation
schemes that can find a $(1 -\varepsilon)$-approximation to the optimal solution in $O(\frac{n}{\varepsilon }\log (\frac{1}{\varepsilon })+m(\frac{1}{\varepsilon })^{O(\min (\sqrt{m},\frac{1}{\varepsilon }))})$ run time. In \cite{jin2018near} the authors presented a PTAS for a more general case different covering shapes (disks, polygons with $O(1)$ edges), running in $O(n\frac{1}{\varepsilon }^{O(1)}+\frac{m}{\varepsilon }\log {m}+m(\frac{1}{\varepsilon })^{O(\min (m,\frac{1}{\varepsilon }))})$ time.
Regarding 1-dimensional case, the authors of \cite{chrobak2015scheduling} have shown how to find $\delta$ points on the line that hit a maximum number of intervals, for a given family of $n$ intervals, in $O(\delta n^2)$ time. The paper \cite{damaschke2017refined}  proved that the problem from \cite{chrobak2015scheduling} is equivalent to finding $\delta$ cliques in a $n$ interval graph that cover a maximum number of distinct vertices. In \cite{damaschke2017refined} the authors  show $O(\delta|E|)$ running time   for connected interval graphs. 
\vspace{-0.5cm}
\section{Problem Formulation}
\vspace{-0.2cm}
 
We consider a set $S$  of $n$ points  distributed in the plane, where each point $s_i\in S,i=1,\dots,n,$ has a positive weight $w(s_i)$. 
Also consider  a set  $P$ of $m$ disks (squares) of radius $R_{COV}$.
We define the following problems: \\ 
 \textbf{ Static-max(S,m)}(\PROBS): Given a set $S$, place shapes from the set $P$ such that the total weight of points from the set $S$ covered by the disks is maximized.\\
   \textbf{ Dynamic-max(S,m)}(\PROB): Given a set $S$, maintain shapes from the set $P$ such that the sum of the weights of the points in $S$ covered by disks from $P$ is maximized  under insertions and deletions of points to/from $S$. \\
  Note that "$\ast$" symbolizes which covering shape is in use ($\Box$ stands for square and $\medcircle$ stands for disk).
  \vspace{-0.5cm}
  \section{One-dimensional Case}
  \vspace{-0.2cm}
In this section, we present the approach for calculating the upper bound value of \PROBS\, and the required runtime  to calculate it.  To evaluate the maximum possible weight of covered points from $S$, we make dimension reduction and deal with the one-dimensional instance of the \PROBS. The one-dimensional instance of the \PROBS\, is defined as follows. 
Given a set $S$ of $n$ weighted points on a line, and a set $P$ of $m$ intervals having length $2R_{COV}$, we would like to locate intervals from $P$ to cover a subset of $S$ having maximum weight. Note that this problem is equivalent to the following piercing problem:
Given a set $S'$ of $n$ weighted intervals $i_{j}=\left[l_{j},r_{j}\right],j=1,\ldots,n,$ of length $2R_{COV}$, we want to identify a set $P'$ of $m$ piercing points that pierce a subset of $S'$ having maximum weight. The weight of $i_j$ denoted by $w(i_j)$ and is equal to $w(s_j),s_j \in S,j=1,\dots,n.$




We call  this problem $m$-\textit{Maximum Weighted Partial Interval Hitting Problem} (\MWPIHP). 
We define a neighbour of interval $i_j$ to be interval $s_k \in S'$ such that $s_k \cup s_j \neq \emptyset$. The neighbourhood of an interval $i_j$ in  $S'$ is the subset of $S'$ induced by all neighbour intervals of $i_j$, including $i_j$ itself.
We solve  \MWPIHP\, by using a dynamic programming approach. Let   $F\left(j,k\right)$  represent the maximum weight of piercing a set $\{i_1,\dots,i_j\}$ of intervals by
$k$ piercing points. $F(j,k)$ can be computed by considering two cases: whether the $k^{th}$ piercing point is used to pierce $i_j$ or not.

Thus, we have the following recurrence for a set of sorted (by left endpoint) intervals $\{i_1,\dots,i_n\}$.
\vspace{-0.2cm}
\begin{equation}
  F\left(j,k\right) =
    \begin{cases}
      0 & \text{if $k=0$}\\
      \max\left\{ F\left(j-1,k\right),F\left(j-N,k-1\right)+W\left(l_j\right)\right\}  & \text{if $k\geq 1$,}\\
      & 1 \leq j\leq n,\\
      &1 \leq k\leq m
     
    \end{cases}       
\end{equation}

Using the fact that all intervals have equal length we can assume that we always pierce $i_{j}$ on its left endpoint. The $W(l_j)$ value is the weight of all intervals pierced by point $l_j$, and $N$ value is the number of intervals pierced by point $l_j$.
Computation of $W(i_j)$ and $N$ can be done in $O\left(\log n\right)$ time by following approach.

We build a balanced binary search tree on the left endpoints of the intervals $\{l_k\}_{k=1}^n$, where each node keeps the total weight of its left subtree and its right subtree, as well the number of nodes in its subtrees. The weight of leaf is equal to the weight of intervals it belongs to, and the number of nodes on its left and right subtree is equal to zero.
In this tree, we look for nodes that belong to the range between the left endpoint of the interval $i_j$ and the left endpoint of a leftmost neighbor of $i_j$. In order to find these nodes we look for the leaf with value $v_{l_j}$ and a leaf with the smallest value that is equal or larger than $v_{l_j-2R_{COV}}$. Next we find the lowest common ancestor of identified leaves $v_{l_j}$ and $v_{l_j-2R_{COV}}$. Denote this node as $lca$.
Define the set of nodes that are located in the path from $v_{l_j}$ to $lca$ as $R$. We will sum up the values of the right subtrees of the trees rooted by the nodes from $R$, denote this sum as $sum_r$. Symmetrically we perform the similar actions from $v_{l_j-2R_{COV}}$. Denote this sum as $sum_r$.
The value of  $W(i_j)$ is equal to $W(i_j) =sum_r+sum_l$.
We find $N$ similarly to $W(i_j)$ in $O(\log n)$ runtime.
This gives us an algorithm with  the running time $O(mn\log n)$, since we have $O(mn)$ values to compute, where each computation takes time $O(\log n)$.

  \begin{theorem}
  The upper bound of \PROBS\, can be evaluated in $O(mn\log n)$ runtime.
  \end{theorem}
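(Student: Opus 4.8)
The plan is to verify three things in turn: that the one-dimensional quantity computed is a genuine upper bound on the two-dimensional optimum, that the displayed dynamic program for $F$ computes this quantity correctly, and that the whole pipeline runs in $O(mn\log n)$ time. Since the recurrence and the auxiliary data structure have already been set up above, the proof amounts to justifying each of these steps.

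First I would justify the dimension reduction. Fix any placement of the $m$ shapes in the plane and project everything onto the $x$-axis. A disk or square of radius $R_{COV}$ projects to an interval of length $2R_{COV}$, and any point covered by a shape in the plane has its projection contained in that shape's projected interval. Hence every point covered in the plane maps to a projected point pierced in one dimension, so for a fixed placement the total weight pierced in 1D is at least the weight covered in 2D. Applying this to the optimal $2$-dimensional placement shows that the optimum of \MWPIHP\ is at least the optimum of \PROBS, i.e.\ it is a valid upper bound. It therefore suffices to solve \MWPIHP\ exactly and bound its running time.

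Next I would prove correctness of the recurrence for $F$. The case $k=0$ is immediate. For the inductive step I consider $i_j$, the interval with the largest left endpoint among $i_1,\dots,i_j$, and argue by an exchange step that some optimal solution for this prefix, whenever it pierces $i_j$, does so at the point $l_j$. This is where I expect the main work to lie: using the fact that all intervals have the common length $2R_{COV}$ and are sorted by left endpoint, any piercing point $p\in[l_j,r_j]$ may be slid left to $l_j$ without losing any prefix interval, since for $t\le j$ with $p\in[l_t,r_t]$ one has $l_t\le l_j\le p\le r_t$, so $l_j$ still lies in $i_t$. Consequently the intervals pierced by $l_j$ form a contiguous block $i_{j-N+1},\dots,i_j$ of the sorted prefix, which is exactly what allows the ``pierce'' branch to recurse to $F(j-N,k-1)+W(l_j)$; the complementary ``skip'' branch $F(j-1,k)$ handles the case in which no piercing point hits $i_j$. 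Taking the maximum of the two branches and inducting on $j$ yields $F(n,m)=\text{OPT}_{\text{1D}}$.

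Finally, for the running time I would count $O(mn)$ table entries, each requiring a constant number of table look-ups together with the values $W(l_j)$ and $N$. Using the augmented balanced search tree on the left endpoints described above, whose nodes store subtree weights and subtree sizes, each of $W(l_j)$ and $N$ is a range query over the left endpoints in $[\,l_j-2R_{COV},\,l_j\,]$ and is answered in $O(\log n)$ time, while the tree is built once in $O(n\log n)$ and the intervals are sorted once in $O(n\log n)$. Charging $O(\log n)$ to each of the $O(mn)$ entries gives the claimed $O(mn\log n)$ bound, which together with the two preceding steps completes the proof.
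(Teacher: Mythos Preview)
Your proposal is correct and follows essentially the same route as the paper: reduce to the one-dimensional \MWPIHP\ by projection, solve that instance with the dynamic program for $F(j,k)$ using the augmented BST for $W(l_j)$ and $N$, and count $O(mn)$ table entries at $O(\log n)$ each. The only cosmetic differences are that the paper projects onto \emph{both} coordinate axes and returns the maximum of the two $1$D optima (still a valid upper bound, and still $O(mn\log n)$), and that you spell out the exchange argument for correctness of the recurrence, which the paper leaves implicit in the text preceding the theorem.
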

  \begin{proof}
 The upper bound of \PROB, i.e., the maximum weight of points which can be covered by $m$ squares can be calculated by projection  of   the set $S$ on $x$ and $y$ axis, following the solution of \MWPIHP\, for each axis separately by choosing the maximum weight between the two results.


  \end{proof}
\vspace{-0.5cm}
\section{Two-dimensional Case}
\vspace{-0.2cm}
First, we focus on the case of squares. The \PROBSS\ is NP-hard when $m$ is part of the input, and the shape is square \cite{megiddo1984complexity,jin2018near}. We slightly change the definition of \PROBSS. The relaxed \PROBSS\, contains an additional constraint required that the distance between each pair of points in $C$ is at least  $2 R_{COV}$, where the set $C$ contains the centers of squares from set $P$. This is in order to avoid any overlapping in the coverage (see \cite{mozaffari2016efficient}) and as a result to prevent interference between neighboring UAVs. We denote this instance of problem \PROBSS\,   as \NAMEHr.\\
  \begin{theorem}
 The \NAMEHr\ is NP-hard.
\end{theorem}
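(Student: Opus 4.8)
The plan is to prove NP-hardness by a polynomial-time reduction from a known NP-hard problem, treating \NAMEHr{} in its natural decision form: given a weight threshold $T$, decide whether $m$ squares of radius $R_{COV}$, whose centers are pairwise at Euclidean distance at least $2R_{COV}$, can cover points of total weight at least $T$. Since \PROBSS{} is already NP-hard \cite{megiddo1984complexity,jin2018near}, the natural source is the $3$-SAT--based construction that underlies that result; the whole difficulty is that the separation constraint ``centers at distance $\ge 2R_{COV}$'' changes the feasible region, so I cannot simply quote the previous theorem.

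First I would observe that the separation constraint has an independence flavour: a placement is feasible exactly when the chosen centers form an independent set of the conflict relation ``closer than $2R_{COV}$''. I would exploit this by reducing from PLANAR $3$-SAT, embedding variable and clause gadgets on a grid whose spacing is calibrated to $2R_{COV}$. Each variable gadget is a pair of heavy point clusters sitting in two candidate slots (encoding true/false); the slots are placed so that a single square covers a whole cluster, but the $2R_{COV}$ separation forbids a square in a slot together with the ``wrong'' neighbouring slot along a wire. Truth values then propagate along wires, and each clause gadget is a cluster that receives enough weight iff at least one incident literal is set true. The weights and the number $m$ are chosen so that a target weight $T$ is attainable precisely when every variable is consistently set and every clause is satisfied.

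The two directions I would then verify are: \emph{completeness}, where from a satisfying assignment I place one square per gadget in the slot dictated by the assignment, check that all pairwise center distances are at least $2R_{COV}$, and sum the covered weight to reach $T$; and \emph{soundness}, where from any feasible placement of covered weight $\ge T$ I argue that the separation constraint together with the weight calibration forces each gadget to be covered in a ``canonical'' slot, letting me read off a consistent assignment that satisfies $\phi$. A cleaner variant, which I would attempt first, is to reinspect the existing \PROBSS{} hardness instance and show that its optimal coverings can always be taken with pairwise center distance $\ge 2R_{COV}$ (for instance because the points already lie on a grid of the right spacing); if so, the relaxation is vacuous on optimal solutions and hardness transfers immediately.

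The main obstacle will be exactly this interaction between the global metric separation constraint and the gadgets: I must choose $R_{COV}$, the grid spacing, and the cluster weights so that the constraint encodes the combinatorial choices faithfully --- strong enough to rule out ``illegal'' high-weight placements (soundness) yet never blocking the intended optimal placement (completeness) --- and then show that the threshold $T$ cleanly separates satisfiable from unsatisfiable instances. Verifying this calibration, together with the correctness of truth-value propagation under Euclidean rather than purely combinatorial conflicts, is where essentially all of the work lies.
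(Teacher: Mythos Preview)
Your plan is plausible but takes a much harder route than the paper, and as written it is only an outline: the entire burden---calibrating gadget spacing and weights so that the Euclidean center-separation constraint exactly encodes the boolean structure, and then proving soundness under that metric constraint---is explicitly deferred.

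The paper sidesteps all of this by choosing a source problem that already carries a disjointness requirement. It reduces from the decision version of the $(p,0)$-Rectangle Covering problem of \cite{ahn2011covering}, which is NP-hard when $p$ is part of the input: given $n$ points in the plane, decide whether $p$ pairwise-disjoint axis-aligned unit squares cover them all. Taking $R_{COV}=\tfrac12$, unit weights, $m=p$, and threshold $T=n$ gives an immediate reduction: pairwise-disjoint unit squares have centers at mutual distance at least $1=2R_{COV}$, so a disjoint covering of all $n$ points is a feasible \NAMEHr{} placement of weight $n$; conversely, weight $n$ simply means every point is covered. No gadgets, no weight calibration, no propagation argument.

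Your ``cleaner variant'' is the right instinct---you correctly sense that the efficient move is to locate an existing hard instance where the separation constraint comes for free---but rather than reinspecting the \PROBSS{} construction and hoping its optima happen to be well separated, the sharper step is to start from a problem whose \emph{definition} already forces the covering shapes to be disjoint.
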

\begin{proof}
The reduction is from the \textit{(p,k)-Rectangle Covering problem} \cite{ahn2011covering}, which is NP-hard problem for any fixed $k$ when $p$ is part of the input. \textit{(p,k)-Rectangle Covering problem} finds $p$ axis-aligned, pairwise-disjoint boxes that together contain at least $n-k$ points. We focus on the decision version of the  problem \textit{(p,0)-Rectangle Covering problem}: Given $n$ points in the plane and an integer $p > 0$, decide whether or not there exist $p$ axis-aligned unit squares  that together cover all points.
The reduction is as follows. The parameter $R_{COV}$ is equal to $0.5$ and the weight of each point in the set $S$ is equal to one. The solution of the \NAMEHr\, will be the number of covered points. Therefore, the number of covered points equals to $n$ iff it is possible to cover all the points with $p$ unit non-overlapping squares.

\end{proof}
\subsection{Approximation solution for \PROBSS}
We want to mention that our primary goal is to find the dynamic solution (under insertions/deletions of weighted points from $S$). Unfortunately, it is impossible to make the dynamic solutions based on the techniques presented in \cite{li2015linear,jin2018near} for achieving PTAS. Therefore,
we first present the solution for the static case and then show how to make it work in the dynamic setting.
  Denote by $G_{r}$ the square grid with cell size $r$ such that the vertical and the horizontal lines are defined as follows:
\begin{align}\label{eq:Grid}
G_{r}=\left\{ \notag (x,y)\in \mathbb{R}^2 \mid x=k\cdot r,k \in \mathbb{Z}\right \} \bigcup \left\{ \notag (x,y)\in \mathbb{R}^2 \mid y=k\cdot r,k \in \mathbb{Z}\right \}
\end{align}
Given a point $s_i \in S$ we call the integer pair  $\left( \left\lfloor\dfrac{x}{r}\right\rfloor ,\left\lfloor\dfrac{y}{r}\right\rfloor\right)$ as \textit{cell index}  (the cell in which $s_i$ is located). Each nonempty cell in $G_r$ will be identified by \textit{index}. 
We calculate the \textit{cell index} $(a_i,b_i)$ for each point $s_i \in S$
and use $(a_i,b_i)$ to find \textit{index} $\pi(a_i,b_i)$. The \textit{index} $\pi(a_i,b_i)$ is a result of Cantor pair function \cite{szudzik2006elegant}, where the \textit{cell index} $(a_i,b_i)$ is input of this function. Cantor pairing function is a  recursive pairing function $\pi:\mathbb{N} \times \mathbb{N} \rightarrow \mathbb{N}$ defined by
 \begin{equation}\label{eq:cantor}
     \pi(x,y)=\frac{(x+y+1)(x+y)}{2}+y
 \end{equation}

Next, we present our heuristic solution of the \PROBSS\ and  \NAMEHr. Denote this solution as \NAMEHS.
We divide the area that contains the set $S$ into grid with cell size $r=2R_{COV}$. We  place the set of $m$ squares of $P$ in the cells having maximal weight, such that the squares fit the cells of $G_r$.

\begin{algorithm}[H]

\RestyleAlgo{ruled}
\KwIn{$G_r$, $m$}
\KwOut{Set $P$ covering $m$ cells having largest weight}
Sort the weights of $G_r$  cells in non decreasing order.\\
Place the set $P$ in the $m$ cells having largest weight.
 \caption{ \NAMEHS } 
 \label{Alg:Intersection} 
\end{algorithm}
\begin{theorem}\label{THM:Appr}
The \NAMEHS\ algorithm provides 1/4-approximate solution to \PROBSS.
\end{theorem}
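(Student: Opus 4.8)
The plan is to show that the total weight $A$ collected by \NAMEHS\ --- which, by construction, equals the sum of the $m$ largest cell weights of the grid $G_r$ with $r=2R_{COV}$ --- satisfies $A\ge \frac{1}{4}\,OPT$, where $OPT$ denotes the optimal value of \PROBSS. The first, easy observation is that $A$ is exactly this sum: each square in $P$ is placed to fit a single cell and therefore covers precisely the points of that cell; since the $m$ chosen cells are distinct and every point of $S$ lies in exactly one cell, there is no double counting in the value reported by the algorithm.

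The geometric core is the observation that an axis-aligned square of side $2R_{COV}$ (the common side length of every square in $P$) can intersect at most four cells of $G_r$: its $x$-projection has length $r$ and hence crosses at most one vertical grid line, and likewise its $y$-projection crosses at most one horizontal line, so the square meets at most a $2\times 2$ block of cells. I would then fix an optimal placement $D_1,\dots,D_m$ and assign every optimally covered point to exactly one square that covers it, so that the assigned weights $w(D_1),\dots,w(D_m)$ partition the covered weight and sum to $OPT$ (this one-to-one assignment is what lets the argument tolerate overlapping optimal squares). For each $D_i$, its assigned points lie in the at most four cells it meets, so by averaging there is a cell $c_i$ meeting $D_i$ whose $D_i$-assigned weight is at least $w(D_i)/4$.

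The step I expect to be the main obstacle is double counting: several optimal squares may select the same cell $c_i$, so one cannot simply add the cell weights $w(c_i)$ and hope to stay below $A$. I would resolve this using the disjointness of the assignment: for any fixed cell $c$, the weights assigned to all squares that chose $c$ come from distinct points of $c$, hence their sum is at most the full cell weight $w(c)$. Summing the per-square inequalities $w(D_i)/4\le(\text{$D_i$-assigned weight in }c_i)$ over $i$ and regrouping by distinct chosen cells then yields $OPT/4\le\sum_{c\in\{c_1,\dots,c_m\}}w(c)$. Because the right-hand side ranges over at most $m$ distinct cells, it is bounded by the sum of the $m$ largest cell weights, which is exactly $A$. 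Chaining these bounds gives $A\ge OPT/4$, establishing the $1/4$-approximation; the degenerate case of fewer than $m$ nonempty cells is immediate, since the algorithm then covers all of $S$.
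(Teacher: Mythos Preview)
Your argument is correct. The geometric observation (an axis-aligned $2R_{COV}\times 2R_{COV}$ square meets at most four cells of $G_r$), the comparison against the top-$m$ cell weights, and your handling of overlapping optimal squares via a disjoint point-to-square assignment all go through.

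The paper reaches the same conclusion by a slightly shorter global route that avoids the assignment step altogether. It lets $Q$ be the set of \emph{all} cells of $G_r$ intersected by the optimal squares; since $|Q|\le 4m$ and every optimally covered point lies in some cell of $Q$, one has $W(OPT)\le W(Q)$. Taking the $m$ heaviest cells $Q_m\subseteq Q$ gives $W(Q_m)\ge W(Q)/4\ge W(OPT)/4$, and the algorithm's value $W(SOL)$, being the sum of the $m$ heaviest cells overall, dominates $W(Q_m)$. Your per-square pigeonhole plus regrouping achieves the same inequality but with an extra layer: you first push each $D_i$'s assigned mass into a single cell $c_i$, then use disjointness of the assignment to control collisions among the $c_i$. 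That detour is not needed because working with the full set $Q$ already absorbs any overlaps among optimal squares (each cell is counted once in $W(Q)$ regardless of how many optimal squares touch it). On the other hand, your version makes explicit why overlapping optimal squares cause no trouble, which the paper leaves implicit.
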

    \begin{proof}
 Define the optimal solution of \PROBSS\ by $OPT$  and the solution that is achieved by  \NAMEHS\ as $SOL$. Additionally, denote the total weight of points covered by $OPT$ and $SOL$ as $W(OPT)$ and $W(SOL)$, respectively.
Each of the squares in $OPT$ may overlap with at most 4 cells from $G_r$. In this case grid $G_r$ divides each square into four rectangles. In Figure \ref{Fig:apro} grid $G_r$ divides one of the squares from $OPT$, $R_{o}$  into four rectangles: $o_1$, $o_2$, $o_3$  and $o_4$.
As already mentioned, each square from $OPT$ is divided by at most four rectangles. Therefore the number of cells in $G_r$ overlapping with the squares from $OPT$ is $4m$. Define the set of sorted cells that overlap with $OPT$, as $Q$, and the weight of this set as $W(Q)$. Additionally, define the set of $m$ cells having maximal weight from set $Q$ as $Q_m$ and the weight of this set as $W(Q_m)$. Obviously, the weight of $W(Q_m)\geq \frac{W(Q)}{4}$. It is easy to see that the equality holds only in the case of uniform distribution of weight among the cells from $Q$; in any other case, the inequality holds. 
On the other side, the weight of $SOL$ is greater than or equal to the weight of any other set of  cells from $G_r$ with the same amount of cells. Therefore $ W(SOL)\geq W(Q_m)\geq \frac{W(Q)}{4}$.

 \begin{figure}[h] 
\centering
\begin{tikzpicture}
\draw[step=1cm,gray,very thin] (-0.99,-0.99) grid (4.99,4.99);

 \draw (0.75,1.5) -- (0.75,2.50) -- (1.75,2.5) -- (1.75,1.5) -- (0.75,1.5); 
 \foreach \x in {1,...,550}
    {
      \pgfmathrandominteger{\a}{-100}{490}
      \pgfmathrandominteger{\b}{-100}{490}
      \fill [fill=red](\a*0.01,\b*0.01) circle (0.02);
    };

    \foreach \x in {1,...,30}
    {
      \pgfmathrandominteger{\a}{75}{155}
      \pgfmathrandominteger{\b}{165}{250}
      \fill [fill=red](\a*0.01,\b*0.01) circle (0.02);
    };
    
    \node(start)[below] at (-0.5,0.75) {$o_1$};
     \node(start2)[below] at (-0.5,3.5) {$o_2$};
     \node(start3)[below] at (2.5,0.75) {$o_3$};
     \node(start4)[below] at (2.5,3.5) {$o_4$};
     \node(start5)[below] at (3.3,2.32) {$R_{o}$};
     (-0.20,0.70) 
     \draw [black, ->          ](start) -- (0.85,1.65);
      \draw [black, ->          ](start2) -- (0.85,2.25);
      
      \draw [black, ->          ](start3) -- (1.65,1.65);
      \draw [black, ->          ](start4) -- (1.65,2.25);
     \draw [black, ->          ](start5) -- (1.78,2);
     
      \draw [black, <->          ](0,-0.5) -- (1.0,-0.5);
      \node(radius)[below] at (0.5,-0.5) {$2 R_{COV}$};
\end{tikzpicture}
\caption{Overlapping between the square from $OPT$ and the grid $G_r$}
\label{Fig:apro} 
\end{figure}
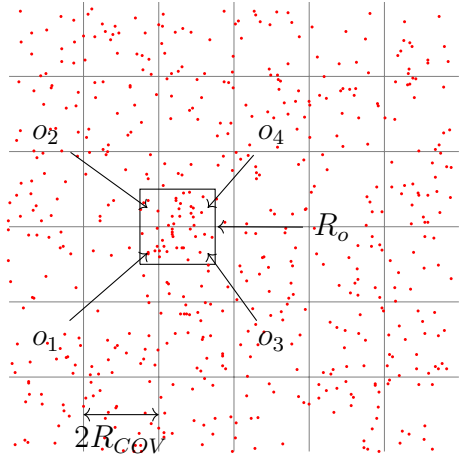

       \end{proof}
    In order to find our solution, we build a  balanced binary search tree $T$ where each node has two fields: key and value. The key field is an \textit{index} of non-empty  cell in grid $G_r$ 
 and value is a total weight of points covered by this cell. For each point $s_i \in S$ we calculate \textit{cell index} $(a_i,b_i)$ and use  $(a_i,b_i)$ to find  $\pi(a_i,b_i)$. For each point $s_i$ we insert  node with key $\pi(a_i,b_i)$ in the $T$, and if it is necessary update the value of this node. 
 
 For each cell in $G_r$ we calculate the total weight of points belonging to this cell; this calculation takes $O\left( n \log n \right)$ runtime. Note that the number of non-empty cells is at most $n$. The runtime required to built a $T$  is  $O\left(n \log n\right)$ and to find $m$ most weighted cells takes $O\left(m \log n\right)$ runtime. 
     \begin{theorem}\label{THM:ApprDynami}
The optimal solution of \PROBSS\, can be achieved by $HSmax_{\Box}(S,4m)$.
\end{theorem}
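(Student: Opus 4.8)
The plan is to reuse the geometric decomposition already established in the proof of Theorem \ref{THM:Appr}. Let $OPT$ denote an optimal placement of $m$ squares for \PROBSS, and recall that the grid $G_r$ with cell size $r=2R_{COV}$ cuts each square of $OPT$ into at most four rectangles. Consequently the set $Q$ of grid cells that overlap the squares of $OPT$ satisfies $|Q|\le 4m$, exactly as in that earlier argument.

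First I would establish that $W(Q)\ge W(OPT)$. Since $G_r$ partitions the plane, every point of $S$ lies in exactly one cell, so there is no double counting in $W(Q)$. Any point covered by $OPT$ lies inside some square of $OPT$, and the cell containing that point therefore overlaps that square and belongs to $Q$. Hence every point counted by $W(OPT)$ is also counted by $W(Q)$, giving $W(Q)\ge W(OPT)$; the cells of $Q$ may in addition contain points of $S$ that fall outside the squares of $OPT$.

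Next I would compare $W(Q)$ with the output of the heuristic run on $4m$ squares. By construction $HSmax_{\Box}(S,4m)$ places its squares in the $4m$ cells of $G_r$ of largest weight (or in all non-empty cells, should there be fewer than $4m$ of them). Because $Q$ is a collection of at most $4m$ cells, the total weight of the $4m$ heaviest cells is at least $W(Q)$: the greedy top-$4m$ choice maximizes the weight over all collections of $4m$ cells, and when $|Q|<4m$ the selected cells already include the $|Q|$ heaviest ones and can only add weight. Chaining the inequalities then yields $W(HSmax_{\Box}(S,4m))\ge W(Q)\ge W(OPT)$, which is precisely the assertion that the optimum of the $m$-square instance is attained by the heuristic on $4m$ squares.

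The argument is short because essentially all of the work is inherited from Theorem \ref{THM:Appr}. The one place that needs care is the inequality $W(Q)\ge W(OPT)$, where I must verify that the disjointness of the grid cells rules out double counting and that every $OPT$-covered point is captured by some cell of $Q$. The greedy comparison against the top $4m$ cells is then routine, with the only boundary case being when $S$ occupies fewer than $4m$ non-empty cells, in which situation $HSmax_{\Box}(S,4m)$ trivially covers all of $S$ and the claim holds immediately.
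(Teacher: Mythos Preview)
Your proposal is correct and follows essentially the same argument as the paper: both use the decomposition from Theorem~\ref{THM:Appr} to bound $|Q|\le 4m$, then observe that the greedy choice of the $4m$ heaviest cells dominates $W(Q)$ and hence $W(OPT)$. If anything, you are more explicit than the paper in justifying $W(Q)\ge W(OPT)$ and in handling the boundary case of fewer than $4m$ non-empty cells, but the underlying approach is identical.
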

\begin{proof}
 \vspace{-0.2cm}
We use similar notations as in Theorem \ref{THM:Appr}. The size of $Q$ is at most $4m$ and the size of $SOL$ is $4m$, too. The weight of $SOL$ is greater than or equal to the weight of any other set of  cells from $G_r$ with the same amount of cells. Obviously, $W(SOL) \geq W(Q)$. 

\end{proof}

\vspace{-0.5cm}
\subsection{Approximate solution for \NAMEDS }
\vspace{-0.2cm}
Now, we focus on the case of disks. We use similar notations as in previous section. We change the size of cell in grid $G_r$ to be  $r=\sqrt{2}R_{COV}$. All the rest remain the same. Denote this solution as \NAMEDSH.

\vspace{-0.2cm}
\begin{theorem}\label{THM:Apprd}
The \NAMEDSH\ algorithm provides $1/7$-approximate solution to $Smax_{\medcircle}(S,m)$.
\end{theorem}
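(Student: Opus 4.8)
The plan is to mirror the structure of the proof of Theorem~\ref{THM:Appr}, but with the covering geometry adapted to disks. In the square case the key combinatorial fact was that a single covering square, when overlaid on the grid $G_r$ with $r=2R_{COV}$, meets at most $4$ grid cells; here the grid cell size has been reduced to $r=\sqrt{2}R_{COV}$ precisely so that each \emph{cell} fits inside a disk of radius $R_{COV}$ (the diagonal of a cell is $\sqrt{2}\cdot r = 2R_{COV}$, matching the disk diameter). The first thing I would establish is the analogous overlap bound: count the maximum number of cells of $G_r$ that a single disk of radius $R_{COV}$ can intersect. I expect this count to be $7$, which is exactly what produces the claimed $1/7$ factor.

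The main obstacle is proving that a disk of radius $R_{COV}$ intersects at most $7$ cells of the grid with side $r=\sqrt{2}R_{COV}$. I would argue this purely geometrically. A disk of radius $R_{COV}=r/\sqrt2$ has diameter $r\sqrt2$, so it fits within a bounding box of side $r\sqrt2<2r$; such a box can straddle at most $2$ grid lines in each direction, hence can touch at most a $3\times 3=9$ block of cells. The nontrivial part is ruling out $8$ and $9$: I would show that the disk is too small to reach the two ``far corner'' cells of a $3\times 3$ block simultaneously. Concretely, among any $3\times 3$ block the four corner cells have their nearest points to the central region at distance more than $R_{COV}$ from the opposite side, and a worst-case placement (disk centered so as to maximize coverage) can reach at most three of the four corners, yielding the bound of $7$ rather than $9$. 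This case analysis, over the position of the disk center relative to the nearest grid vertex, is the delicate step and the place where the constant is pinned down.

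Once the overlap bound of $7$ is in hand, the rest of the argument is identical in form to Theorem~\ref{THM:Appr}. Let $OPT$ be an optimal solution of $Smax_{\medcircle}(S,m)$ with covered weight $W(OPT)$, and let $SOL$ be the output of \NAMEDSH. Each of the $m$ disks of $OPT$ overlaps at most $7$ cells of $G_r$, so the set $Q$ of grid cells meeting $OPT$ has size at most $7m$, and the weight $W(OPT)$ covered by the disks is bounded above by the total weight $W(Q)$ of these cells, since every point covered by a disk lies in some cell of $Q$. Selecting the $m$ heaviest cells from $Q$ gives a subset $Q_m$ with $W(Q_m)\geq \frac{W(Q)}{7}$ by the pigeonhole/averaging argument used before, with equality only under uniform weight distribution. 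Because \NAMEDSH\ chooses the $m$ globally heaviest nonempty cells of $G_r$, and the weight of the points inside a cell is fully captured by placing a disk on that cell (as each cell fits inside a disk of radius $R_{COV}$), we have $W(SOL)\geq W(Q_m)$. Chaining these gives
\begin{equation}
W(SOL)\;\geq\; W(Q_m)\;\geq\; \frac{W(Q)}{7}\;\geq\; \frac{W(OPT)}{7},
\end{equation}
which is the desired $1/7$-approximation.

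One subtlety I would be careful to check is the direction of the containment that justifies $W(SOL)\geq W(Q_m)$: a disk of radius $R_{COV}$ centered appropriately must cover \emph{all} points in a chosen cell, which holds because the cell's diameter $r\sqrt2=2R_{COV}$ equals the disk's diameter, so a disk concentric with the cell contains it. I would verify that the grid size $r=\sqrt{2}R_{COV}$ was chosen exactly to make both the ``cell fits in a disk'' direction (used for the lower bound on $W(SOL)$) and the overlap count of $7$ (used for the upper bound on $W(OPT)$) work simultaneously. With both facts confirmed, the approximation ratio follows directly.
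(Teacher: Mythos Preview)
Your proposal is correct and follows the same approach as the paper's own proof: reduce to Theorem~\ref{THM:Appr} by replacing the overlap bound of $4$ with $7$, then reuse the averaging argument verbatim. The paper itself is considerably terser---it simply asserts that each disk in $OPT$ overlaps at most $7$ cells of $G_r$ (pointing to a figure) and then invokes the argument of Theorem~\ref{THM:Appr}---whereas you sketch a genuine geometric justification for the $7$-cell bound via the $3\times 3$ bounding block and the corner-exclusion case analysis, and you also make explicit the ``cell fits inside a disk'' direction needed for $W(SOL)\ge W(Q_m)$; both of these are details the paper leaves implicit.
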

\begin{proof}
\vspace{-0.1cm}
The proof is  similar to the proof of Theorem \ref{THM:Appr} with a slight difference. Each of the disks in $OPT$ may overlap with at most 7 cells from $G_r$. See Figure \ref{Fig:apro}. Thus, set $Q$ may overlap with $7m$ cells. Using similar arguments as in Theorem \ref{THM:Appr} we can see that the approximation ration is $\frac{1}{7}$.
\vspace{-0.3cm}
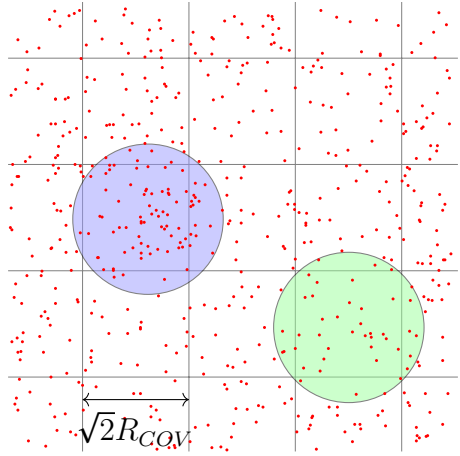
\begin{figure}[h] 
\centering
\begin{tikzpicture}
\draw[step=1.41421cm,gray,very thin] (-0.99,-0.99) grid (4.99,4.99);

\draw [fill=blue!40,opacity=0.5](0.87,2.1) circle (1);
\draw [fill=green!40,opacity=0.5] (3.54,0.66) circle (1);
 \foreach \x in {1,...,550}
    {
      \pgfmathrandominteger{\a}{-100}{490}
      \pgfmathrandominteger{\b}{-100}{490}
      \fill [fill=red](\a*0.01,\b*0.01) circle (0.02);
    };

    \foreach \x in {1,...,30}
    {
      \pgfmathrandominteger{\a}{75}{155}
      \pgfmathrandominteger{\b}{165}{250}
      \fill [fill=red](\a*0.01,\b*0.01) circle (0.02);
    };
    
      
     
      \draw [black, <->          ](0,-0.3) -- (1.41421,-0.3);
      \node(radius)[below] at (0.7,-0.3) {$\sqrt{2}R_{COV}$};
\end{tikzpicture}
\caption{Location of disk from optimal solution and $HSmax_{\medcircle}(S,m)$, left  and right disks, respectively}
\label{Fig:apro} 
\end{figure}
\end{proof}

\section{Maintenance of Dynamic Covering Set}
\vspace{-5pt}
 In this section, we deal with a dynamic set of points $S$. From time to time, a point is added to $S$, deleted from $S$, or its weight may be updated. In the dynamic version of the \PROB\ problem we want to maintain dynamically a set $P$ of disks (squares) to maximize the total weight of covered points from $S$.
 
First, we focus on the case of squares. To deal with the dynamic set $S$, we offer the following approach. After each change in $S$, we check if the square $p_i \in P$ that covers points having minimal total weight over all squares from $P$ after the set update needs to  change its location in order to cover points of larger total weight. At the start of the algorithm, we place the set $P$ by  \NAMEHS\ from the previous section. For simplicity, we also assume that all the  points from $S$ are in general position.



 
   
We construct a data structure $D$ of size $O\left(n \right)$ that will allow us to update the set $S$  and to maintain the location of $P$ dynamically. The data structure $D$  supports the following operations:
\begin{itemize}
    \item  $Insert\left(s_{new}\right)$ - Insert a new point $s_{new}$ into $S$.
    \item  $Delete\left(s_i\right)$ - Delete an existing point $s_i$ from $S$.
    \item  $Update\left(s_i,\hat{w}\right)$ - Given $s_i \in S$ update the weight of  $s_i$ to be $\hat{w}$.
\end{itemize}
 Among the non-empty cells of grid $G_r$ and covered cells  by set $P$, we select the cell \minCell\ having the minimum weight. Also, among the non-empty cells of grid $G_r$ and not covered cells by set $P$, we select the cell \maxCell\ having the maximum weight. We define the weight of the cell in $G_r$ with \textit{index} $\pi(a,b)$ as $ \rho_{\pi(a,b)}$.

The proposed data structure $D$ is a combination of two data structures. The first data structure keeps the set $S$, and the second data structure contains the set $C$ of square centers and weights of points covered by each $p_i \in P$. We denote these data structures by $D_1$ and $D_2$, respectively.

The first  data structure $D_1$ is a balanced binary search tree built on the $x$-coordinate of the points in $S$.

 
The second data structure $D_2$ is combination of three balanced binary search trees $T_1$, $T_2$ and $T_3$. The balanced binary search tree $T_1$ is a tree where each node corresponds to the weight of the cell from $G_r$ covered by the square from set $P$. 
Therefore if cell with \textit{index} $\pi(a,b)$ covered by the square from set $P$, then node with key $ \rho_{\pi(a,b)}$ belongs to $T_1$.
The balanced binary search tree $T_2$ is a tree where each node corresponds to the weight of nonempty cell in $G_r$, which is not covered by any square from set $P$. Therefore if nonempty cell with \textit{index} $\pi(a,b)$ not covered by any square from set $P$, then node with key $ \rho_{\pi(a,b)}$ belongs to $T_2$.
The balanced binary search tree $T_3$ is a tree where each node has key and value. The key is an \textit{index} $\pi(a,b)$ of the non-empty cell in $G_{r}$, and value is a  $c_i \in C$ if the cell is covered by square $p_i \in P$ or $NULL$  if the cell is not covered by square from set $P$.
Additionally, each node in $T_3$ has a pointer to node in $T_1$ or $T_2$, accordingly if the cell is covered or not. We note that there is no need to store any information for empty cells. See Figure \ref{Fig:Data structure}.
    \vspace{-0.6cm}
 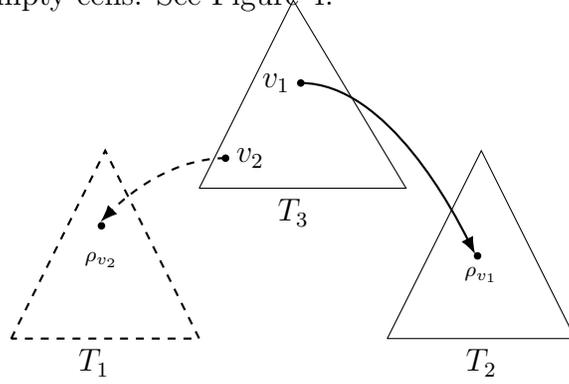
\begin{figure}[h] 
\centering
\begin{tikzpicture}
 \def\a{0.5};
\def\b{\a};
\def\lw{0.1};
 \draw [thick,dashed ] (-1,0) -- (0.25,2.5) -- (1.5,0)  -- (-1,0);
 \draw (4,0) -- (5.25,2.5) -- (6.5,0)  -- (4,0);
 \draw (1.5,2) -- (2.75,4.5) -- (4.25,2)  -- (1.5,2);
 
\fill [](1.85,2.4) circle (0.05);
\node[anchor= west] at (1.85,2.4) {$v_2$};
\fill [](2.85,3.4) circle (0.05);
\node[anchor= east] at (2.85,3.4) {$v_1$};
\fill [](5.2,1.1) circle (0.05);
\node[below] at (5.25,1.1) {\scriptsize
 $ \rho_{v_1}$};
\fill [](0.2,1.5) circle (0.05);
  \node[below] at (0.2,1.3) {\scriptsize $ \rho_{v_2}$}; 
     
\draw[thick,arrows={-Latex}] (2.85,3.4) parabola (5.17,1.12);
\draw [thick,dashed,arrows={-Latex}](1.85,2.4) parabola (0.185,1.55);
\node(start)[below] at (0.1,0.0) {$T_1$};
\node(start)[below] at (5.25,0.0) {$T_2$};

\node(start)[below] at (2.75,2) {$T_3$};

\end{tikzpicture}
\caption{Data structure $D_2$}
\label{Fig:Data structure} 
\end{figure}
    
    \vspace{-0.6cm}
    The data structure $D$ uses $ O(n)$  storage and it can be constructed in $ O(n\log n)$ time. 

\vspace{-15pt}
\subsection{{\normalsize{}Update}}\label{sec:Update}
\vspace{-5pt}
In this section we describe the \texttt{Update} algorithm. Assume that the weight of point $s_i \in S$ will be updated to  $\hat{w}$.
We begin the \texttt{Update} algorithm by updating the point $s_i$ weight in $D_1$. We find the $x$-coordinate of point $s_i$ in the tree $D_1$ and update the weight of point $s_i$. 
We store the difference between the new and the old weights of point $s_i$: $temp = w(s_i)-\hat{w}$. 
In order to find a node which represented $s_i$ we need to perform at most $O\left(\log n\right)$ operations.

We keep continuing the \texttt{Update} algorithm by updating $D_2$, i.e., we update the fields of nodes in $D_2$ that were affected by the weight change of point $s_i$. We find the \textit{cell index} $(a,b)$ of the point $s_i$ and use the $(a,b)$ to find the \textit{index} $\pi(a,b)$. In $T_3$ we find the node that is associated with $\pi(a,b)$. Denote this node as $v_{cell}$. The runtime to find $v_{cell}$ is $O\left(\log n\right)$. Without loss of generality, assume that the point $s_i$ belongs to the  cell which is covered by $P$. Then, the node $v_{cell}$ has a pointer to node  in $T_1$. Denote this node as $v_{\pi(a,b)}$. The key of node $v_{\pi(a,b)}$ is equal to the total weight of points that belong to cell with \textit{index} $\pi(a,b)$, i.e. $\rho_{\pi(a,b)}$. 
Next we have to update the key of node $v_{\pi(a,b)}$. The updated key of node $v_{\pi(a,b)}$ will be calculated by $\rho_{v_{\pi(a,b)}}=\rho_{v_{\pi(a,b)}}+temp$. We may need to rebalance the $T_1$ structure too, in $O\left(\log n\right)$.
 
Let $p' \in P$ be the square covering \minCell\ after point $s_i$ weight update. We need to check whether or not the location of $p' \in P$ should be changed to cover \maxCell\ cell instead of \minCell. If the weight of \minCell\ is greater than the \maxCell\ weight, we finish. In another case, \maxCell\ $>$ \minCell, we have to replace the square  $p'$ to cover the cell \maxCell. As the result of this, we have to update the data structure $D_2$. The update of data structure $D_2$ includes the change the value of $v_{cell}$ from $c'$ to $NULL$, update the value of node $v\in T_1 $ from \minCell\ to \maxCell, and delete from $T_2$ the node associated with the cell \maxCell\ and inserting new node associated with the cell \minCell\
to $T_2$.  The runtime is $O\left(\log n\right)$.


\vspace{-15pt}
\subsection{Insertion}
\vspace{-5pt}
In  this  section  we  describe  the \texttt{Insertion} algorithm. Let $s_{new}$ be a new point to be added to $S$.  Insertion of new point into $D$ requires the following steps: 
\begin{itemize}
    \item Insert the point $s_{new}$ as a new node in $D_1$.
    \item Update the nodes of $D_2$ that are affected by a new point $s_{new}$.
\end{itemize}

 The insertion of point  $s_{new}$ into $D_1$  takes $O\left(\log n\right)$ runtime. Next we calculate \textit{cell index} of point $s_{new}$. We find the \textit{cell index} $(a,b)$ of the point $s_{new}$ and using the $(a,b)$ to find the \textit{index} $\pi(a,b)$. We insert node with key $\pi(a,b)$ in $T_3$ (or not if $\pi(a,b)$ exist), and update $T_1$ or $T_2$. There are two options depending on whether exist a node in $T_3$ with key $\pi(a,b)$.  
In the first case a node with key $\pi(a,b)$ does not exist in $T_3$. We will insert a node with key $\pi(a,b)$ into $T_3$, and node with key $\rho_{\pi(a,b)}$ is inserted into $T_2$. Also, node with key $\pi(a,b)$ from $T_1$ has a pointer to node with key $\rho_{\pi(a,b)}$ from tree $T_2$.  The  insertion of new node in $T_1$ and $T_3$ takes  $O\left(\log n\right)$ runtime. In the second case a node with key $\pi(a,b)$ exists in $T_3$. Two options possible: the cell is covered by one of the squares from set $P$ or not. Therefore we need to update node with key $\rho_{\pi(a,b)}$ in $T_2$ or $T_3$. The update takes $O\left(\log n\right)$ runtime.



 Let $p' \in P$ be the square covering \minCell. We need to check whether or not the location of $p' \in P$ should be changed to cover \maxCell\ cell instead of \minCell. The approach of replacing the square $p'$ position is described in Section \ref{sec:Update}. In this case the  update operation takes $O\left(\log n\right)$ runtime.
 
\vspace{-15pt}
\subsection{{\normalsize{}Deletion}}
\vspace{-5pt}
In this section we describe the \texttt{Deletion} algorithm. Let $s_{del}$ be a point to be deleted from $S$. The deletion of point  $s_{del}$ from $D$ requires the following steps: 
\begin{itemize}
    \item Delete an existing point $s_{del}$ from $D_1$.
    \item Update the nodes of $D_2$ that affected by deletion of the point $s_{del}$.
\end{itemize}
The  required runtime  to delete $s_{del}$ from $D_1$ is $O\left(\log n\right)$. Next we calculate \textit{cell index} of point $s_{del}$. We find the \textit{cell index} $(a,b)$ of the point $s_{del}$ and using the $(a,b)$ to find the \textit{index} $\pi(a,b)$. We find node with key $\pi(a,b)$ in $T_3$. Next, we update $T_1$ or $T_2$, depending on the value (center of covering square or NULL) of node with key $\pi(a,b)$. Assume without loos of generality that the cell with  \textit{cell index} $(a,b)$ is covered by square $p'\in P$. We update the node with key $\rho_{\pi(a,b)} $ in $T_2$ using similar approach as described in \texttt{Update} algorithm with one difference: $temp=-w(s_{del})$. Additionally, we need to check whether or not the location of square which  covers \minCell\ should be changed to cover \maxCell\ cell instead of \minCell. 

Now, we focus on the case of disks. To deal with dynamic set $S$, when the covering shape is disk, we perform the following changes. At the start of the algorithm, we place set $P$ of $m$ disks by $HSmax_{\medcircle}(S,m)$. The disks in our solution may overlap. Therefore we have to exclude double counting of the weight of point covered by more than one disk. For this, when we calculate the weight of disk, we take into account only the points located in the cell, which is circumscribed by the disk. The rest remains the same.
 
 \begin{theorem}
Let $S$ be a set of weighted (non-negative) points on the plane, and assume that the size of $S$ never exceeds $O(n)$. It is possible to construct in time $O(n \log n)$ a data structure of size $O(n )$ that enables us to maintain a set of $m$ disks (squares), under insertions and deletions of points to/from $S$, that covers a subset of $m$ cells  having largest weights in given $G_r$, in time $O( \log n)$ per update.\qed
\end{theorem}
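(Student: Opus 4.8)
The plan is to read the statement as a consolidation of the components built in this section, namely $D=(D_1,D_2)$ with $D_2=(T_1,T_2,T_3)$, and to verify that these components satisfy a single maintained invariant. First I would fix that invariant: after every operation the keys of $T_1$ are exactly the weights of the $m$ cells of $G_r$ currently carrying a square of $P$, the keys of $T_2$ are the weights of all the remaining non-empty cells, and the $m$ covered cells are precisely the $m$ heaviest non-empty cells of $G_r$. Algorithm \ref{Alg:Intersection} (that is, \NAMEHS) establishes this invariant at initialization, and $T_3$, keyed by the Cantor \textit{index} of each non-empty cell and pointing to the corresponding node of $T_1$ or $T_2$, lets us locate and reclassify any cell in $O(\log n)$ time. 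Throughout, \minCell\ is the minimum key of $T_1$ and \maxCell\ is the maximum key of $T_2$.

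For the resource bounds I would argue as follows. Each point of $S$ lies in exactly one cell, so the number of non-empty cells never exceeds $|S|=O(n)$; hence $T_3$, and therefore $T_1$ and $T_2$ together, hold $O(n)$ nodes, and $D_1$ holds $O(n)$ nodes, giving the claimed $O(n)$ space. For construction, computing every \textit{cell index} through the Cantor function and aggregating the per-cell weights takes $O(n\log n)$, building the four balanced search trees takes $O(n\log n)$, and the initial placement of $P$ on the $m$ heaviest cells takes $O(m\log n)=O(n\log n)$ (if $m$ exceeds the number of non-empty cells we simply cover all of them, which is consistent with the invariant).

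The heart of the argument, and the step I expect to be the main obstacle, is the correctness of a single update via at most one swap. The key observation is that each of \texttt{Insert}, \texttt{Delete}, and \texttt{Update} alters the total weight of exactly one cell and leaves every other cell weight untouched. I would prove an exchange lemma: assuming the covered set was the top $m$ before the operation, one comparison of \minCell\ against \maxCell\ restores the invariant. If the affected cell is already covered and its weight rises, or is uncovered and its weight falls, the top-$m$ set is unchanged and no swap occurs. In the two remaining cases only the affected cell can cross the covered/uncovered boundary, since every other covered cell still weighs at least the pre-update \minCell, which is at least the pre-update \maxCell, which in turn bounds every other uncovered cell; in these cases the affected cell is exactly the one realizing \minCell\ (respectively \maxCell), so moving the square between the cells realizing \minCell\ and \maxCell\ is the unique correct repair. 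I would then check that after this swap the new value of \minCell\ dominates the new value of \maxCell, which rules out any second swap.

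Finally I would total the work per update. Each routine performs a constant number of searches and updates in $D_1$ and $T_3$, one key change in $T_1$ or $T_2$, and, when a swap is triggered, one deletion and one insertion in each of $T_1$ and $T_2$ together with $O(1)$ value and pointer changes in $T_3$; every such access to a balanced binary search tree on $O(n)$ nodes costs $O(\log n)$, so the cost per operation is $O(\log n)$, matching the Update, Insertion, and Deletion routines above. For the disk variant the entire argument carries over verbatim once the cell size is set to $\sqrt{2}R_{COV}$ and each disk is charged only the points of its inscribed cell, which avoids double counting across overlapping disks; invoking Theorem \ref{THM:Apprd} then gives the same space, construction, and update bounds and completes the proof.
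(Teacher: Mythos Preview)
Your proposal is correct and follows the paper's own approach: the theorem in the paper carries only a \qed\ and is meant as a summary of the data structure $D=(D_1,D_2)$ and the \texttt{Update}, \texttt{Insertion}, and \texttt{Deletion} routines described just before it. Your invariant, resource-bound accounting, and per-operation cost analysis simply spell out that construction, and your exchange lemma (one swap suffices because a single operation changes exactly one cell weight) makes explicit a correctness point the paper leaves implicit in its operation descriptions; this is a natural elaboration rather than a different route.
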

 
\begin{theorem}\label{THM:ApprDynami}
The approximation ratio of algorithms that solve \NAMEH\ and  \NAMED, on dynamic set of points $S$ is  $\frac{1}{4}$ or $\frac{1}{7}$, respectively.
\end{theorem}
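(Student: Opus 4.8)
The plan is to derive the dynamic approximation guarantees directly from the static ones, so that no new geometric analysis is needed; the only thing that must be checked is that the data structure $D$ really preserves the right invariant after each operation. Concretely, I would argue that after every call to \texttt{Insert}, \texttt{Delete}, or \texttt{Update}, the set $P$ maintained by $D$ covers exactly the $m$ non-empty cells of $G_r$ of largest weight --- i.e.\ $P$ is identical to the output that the static algorithm \NAMEHS\ (respectively \NAMEDSH\ in the disk case) would produce if rerun from scratch on the current point set $S$. The preceding theorem already gives that $D$ supports each operation in $O(\log n)$ time; here I only use its structural side, namely that $T_1$ stores the covered-cell weights, $T_2$ the uncovered non-empty-cell weights, and that the update routine compares \minCell\ against \maxCell.

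The key step, and the one I expect to be the main obstacle, is a single-swap invariant lemma: each of the three operations changes the weight $\rho_{\pi(a,b)}$ of exactly one cell of $G_r$, so the set of the $m$ heaviest cells can change by at most one element, and the one conditional swap performed by the update routine is enough to restore it. I would split into two cases. If the affected cell's weight increases (an insertion, or an \texttt{Update} raising the weight), then a covered cell stays covered and the top-$m$ set is unchanged; if the cell was uncovered it can enter the top $m$ only by displacing the current lightest covered cell, which is exactly \minCell, and the routine swaps it in precisely when its weight exceeds that of \minCell. Symmetrically, if the affected cell's weight decreases, an uncovered cell stays uncovered, while a covered cell can drop out of the top $m$ only in favour of the current heaviest uncovered cell \maxCell, which is again what the swap tests. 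In all cases a single swap between \minCell\ and \maxCell\ restores the invariant; ties among equal cell weights are harmless under any fixed tie-breaking rule, and do not arise generically since $S$ is in general position.

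Finally I would close the argument by transporting the static bounds. Because the dynamic optimum on any snapshot of $S$ is exactly the static optimum $\mathit{OPT}$ of \PROBSS\ on that same point set, and because the maintained $P$ coincides with the \NAMEHS\ solution, Theorem \ref{THM:Appr} gives $W(\mathit{SOL}) \geq \tfrac{1}{4} W(\mathit{OPT})$ at every moment for squares, which is the claimed $\tfrac{1}{4}$ ratio for \NAMEH. For disks the identical reasoning applies with \NAMEDSH\ and Theorem \ref{THM:Apprd}, yielding $\tfrac{1}{7}$; here I would add the one extra remark that, since each disk circumscribes its cell, the weight counted by the data structure (points in the cell only) is a lower bound on the weight actually covered by the disk, so the $\tfrac{1}{7}$ guarantee proved for the counted weight only improves for the true covered weight. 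This completes the reduction of the dynamic guarantee to the static Theorems \ref{THM:Appr} and \ref{THM:Apprd}.
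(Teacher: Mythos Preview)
Your proposal is correct and follows essentially the same approach as the paper: you establish the invariant that after each operation the maintained $P$ coincides with the $m$ heaviest cells (because a single cell's weight changes and one \minCell/\maxCell\ swap suffices), and then you invoke the static Theorems~\ref{THM:Appr} and~\ref{THM:Apprd} to transfer the $\tfrac14$ and $\tfrac17$ ratios. Your treatment is in fact slightly more careful than the paper's, adding the tie-breaking remark and the observation that for disks the cell-only weight is a lower bound on the true covered weight.
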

\begin{proof}
We claim that after each change in the set of points $S$, the set of $m$ covering shapes (squares or disks) $P$  covers $m$ cells having largest weights. After any change in set $S$, this change affects only one cell. Therefore, the total weight of points that belong to this cell may be changed.  Denote this cell as \cell.
First, we assume that square/disk $p'\in SOL$ fully covers \cell\ (in the case $p'$ is square then $p'$ coincides with \cell, in the case $p'$ is disk,  \cell\ is circumscribed by $p'$). Only in the case, a total weight of points belonging to \cell\ has been decreased and became smaller than the weight of points belongs to \maxCell, we need to replace $p'$ to cover \maxCell. Opposite, we assume that no square/disk from $P$ covers \cell. Only in the case that the total weight of points that belong to \cell\ has grown up and became  greater than the weight of points that belong to \minCell, we  replace the location of one of the squares/disks from a set $P$ to coincide with \maxCell\ (or be circumscribed by $p'$ in the case of disk) instead of \minCell. After the replacement, the set $P$ covers $m$ cells with largest weight.

Thus, following the result from Theorem \ref{THM:Appr} or Theorem \ref{THM:Apprd},  we may conclude that we can achieve approximation ratio for  \NAMEH\ or \NAMED\  to be $\frac{1}{4}$ or $\frac{1}{7}$, respectively.
\end{proof}

\section{Conclusion}
\vspace{-0.3cm}
In this paper, we dealt with the problem of dynamic maintenance of drones' swarm in order to cover ground users. We presented a heuristic algorithm that provides $\frac{1}{4}$ or $\frac{1}{7}$ approximation of the optimal solution when the covered shapes are squares or disks, respectively. The time required to maintain the set of drones after any change is $O(\log n)$. It would be interesting the generalize the results for the case where UAVs are required to be connected.

\printbibliography


\end{document}